\theoremstyle{definition}
\newtheorem{definition}{Definition}
\newtheorem{example}{Example}
\newtheorem*{remark}{Remark}
\newtheorem*{remarks}{Remarks}
\theoremstyle{plain}
\newtheorem{theorem}{Theorem}
\newtheorem{proposition}{Proposition}
\DeclareMathOperator{\TN}{\mathsf{T}(\mathcal{N})}
\DeclareMathOperator*{\argmax}{argmax}
\begin{document}

\begin{frontmatter}
\title{Phylogenetic diversity and biodiversity indices on phylogenetic networks}

\author{Kristina Wicke}
\ead{kristina.wicke@uni-greifswald.de}

\author{Mareike Fischer \corref{cor1}}
\ead{email@mareikefischer.de}
\cortext[cor1]{Corresponding author}

\address{Institute of Mathematics and Computer Science, University of Greifswald, Germany}

\begin{abstract}
In biodiversity conservation it is often necessary to prioritize the species to conserve. 
Existing approaches to prioritization, e.g. the Fair Proportion Index and the Shapley Value, are based on phylogenetic trees and rank species according to their contribution to overall phylogenetic diversity.
However, in many cases evolution is not treelike and thus, phylogenetic networks have been developed as a generalization of phylogenetic trees, allowing for the representation of non-treelike evolutionary events, such as hybridization.
Here, we extend the concepts of phylogenetic diversity and phylogenetic diversity indices from phylogenetic trees to phylogenetic networks. 
On the one hand, we consider the treelike content of a phylogenetic network, e.g. the (multi)set of phylogenetic trees displayed by a network and the so-called lowest stable ancestor tree associated with it. On the other hand, we derive the phylogenetic diversity of subsets of taxa and biodiversity indices directly from the internal structure of the network. 
We consider both approaches that are independent of so-called inheritance probabilities as well as approaches that explicitly incorporate these probabilities. 
Furthermore, we introduce our software package NetDiversity, which is implemented in Perl and allows for the calculation of all generalized measures of phylogenetic diversity and generalized phylogenetic diversity indices established in this note that are independent of inheritance probabilities. 
We apply our methods to a phylogenetic network representing the evolutionary relationships among swordtails and platyfishes (\textit{Xiphophorus}: Poeciliidae), a group of species characterized by widespread hybridization.
\end{abstract}

\begin{keyword}
Hybridization \sep Phylogenetic networks \sep Phylogenetic diversity \sep Shapley Value \sep Fair Proportion Index
\end{keyword}
\end{frontmatter}


\section{Introduction}
Facing a major extinction crisis and the inevitable loss of biodiversity at the same time with limited financial means, biological conservation has to prioritize the species to conserve. In this matter, the so-called phylogenetic diversity (\citet{Faith1992}) has been introduced as a measure of biodiversity based on the evolutionary history of species. It serves as a basis for biodiversity indices used in taxon prioritization, e.g. the Fair Proportion Index and the Shapley Value (\citet{Haake2007, Hartmann2013, Fuchs2015, Wicke2015}). \\
Both phylogenetic diversity, as well as the Fair Proportion Index and the Shapley Value are based on phylogenetic trees and thus, assume the evolutionary history of species to be treelike.
However, there are several forms of non-treelike evolution, such as hybridization, affecting a variety of species. 
Therefore, phylogenetic reticulation networks have become an important concept in evolutionary biology, allowing for the representation of non-treelike evolution. \\
Here, we aim at combining both approaches, i.e. we aim at extending the concept of phylogenetic diversity and its measures from phylogenetic trees to phylogenetic networks. So far, phylogenetic diversity and the Shapley Value have been considered for so-called split networks, which can be used to represent conflict in data (\citet{Chernomor2016, Volkmann2014}), but no attempts have been made towards the generalization of phylogenetic diversity and its measures to reticulation networks. \\
In this note we first recapitulate phylogenetic diversity, the Fair Proportion Index and the Shapley Value on phylogenetic trees, before we focus on generalizing these concepts to phylogenetic networks. \\
We will introduce a variety of definitions for generalized phylogenetic diversity, following three main principles: the calculation of spanning arborescences and subgraphs of a network, the consideration of the (multi)set of phylogenetic trees displayed by a network and the construction of the so-called lowest stable ancestor tree associated with a network. \\
We will then turn our attention to the Fair Proportion Index and the Shapley Value and suggest different ways of using them as taxon prioritization tools in the context of phylogenetic networks. \\
Both for the generalized measures of phylogenetic diversity and the generalized biodiversity indices, we develop both approaches that are independent of so-called inheritance probabilities as well as approaches that explicitly incorporate these probabilities. \\
In case of the former, all approaches are implemented in our new software tool NetDiversity, which has been made publicly available at \\
www.mareikefischer.de/Software/NetDiversity.zip. \\
Moreover, we test NetDiversity on a recently published phylogenetic network of swordtails and platyfishes (\textit{Xiphophorus}: Poeciliidae), whose evolution is characterized by widespread hybridization (\citet{Solis-Lemus2016}).

\section{Preliminaries}
Let $X$ be a finite set of species (taxa). 
A \emph{rooted phylogenetic $X$-tree $\mathcal{T}$} is a rooted tree with root $\rho$ where the leaves are bijectively labeled by $X$. $\mathcal{T}$ is called \emph{binary} if all internal nodes have degree $3$ and the root has degree $2$. 
Throughout this paper, when we refer to trees, we always mean rooted phylogenetic trees. 
Furthermore, we assume all edges in a tree to have edge lengths greater than zero assigned to them, and we denote the length of an edge $e$ as $\lambda_e > 0$. \\
Note that all edges in a rooted phylogenetic tree $\mathcal{T}$ are directed away from the root, thus formally the treeshape of $\mathcal{T}$ is a so-called \emph{arborescence}.

\begin{definition}[Arborescence] \label{def_arborescence}
Let $G=(V,E)$ be a directed graph and let $\rho \in V$ be a specified root node (of indegree 0). Then $G$ is an \emph{arborescence} (rooted at $\rho$) if there is exactly one directed path from $\rho$ to $u$ for all nodes $u \in V \setminus \{\rho\}$.
\end{definition}

\noindent
A \emph{rooted binary phylogenetic network} $\mathcal{N}$ on $X$ is a connected rooted acyclic digraph such that:
	\begin{itemize}
	\item the root has outdegree 2 (and indegree 0),
	\item each node with outdegree 0 has indegree 1, and the set of nodes with outdegree 0 is bijectively labeled by $X$,
	\item all other nodes either have indegree 1 and outdegree 2, or indegree 2 and outdegree 1. 
	\end{itemize}
Nodes with indegree 2 and outdegree 1 are called \emph{reticulation nodes} and all other nodes are called \emph{tree nodes}. Furthermore, tree nodes with outdegree 0 are referred to as \emph{leaves}.
Edges directed into a reticulation node are called \emph{reticulation edges} and edges directed into a tree node are called \emph{tree edges}. 
When we refer to phylogenetic networks, we always mean rooted binary phylogenetic networks.
Moreover, when we refer to the size of a tree or a network, we mean the number $n = \vert X \vert$ of taxa, i.e. the number of leaves of the tree or network under consideration. \\

\noindent Additionally, we assume all tree edges of a phylogenetic network to have edge lengths greater than zero assigned to them and denote the length of a tree edge $e$ as $\lambda_e > 0$. W.l.o.g. we define the edge lengths of all reticulation edges to be zero. 
However, we assign so-called \emph{inheritance probabilities} to the reticulation edges of a network, reflecting the probability with which a hybrid species inherits its genetic material from both of its parents. More formally, let $\mathcal{N}$ be a phylogenetic network on $X$ and let $r$ be a reticulation node, i.e. a hybrid species, with parents $p_1$ and $p_2$. 
Let $e_1=(p_1, r)$ be the edge between $p_1$ and $r$ and analogously let $e_2=(p_2,r)$ be the edge between $p_2$ and $r$. Then we use $\gamma_{e_1} \in (0,1)$ to denote the probability that $r$ inherits its genetic material (e.g. a nucleotide or a gene) from $p_1$ and we use $\gamma_{e_2} = 1 - \gamma_{e_1}$ to denote the probability that the genetic material is inherited from $p_2$. We call $\gamma_{e_1}$ and $\gamma_{e_2}$ \emph{inheritance probabilities} and associate $\gamma_{e_1}$ with edge $e_1$ and $\gamma_{e_2}$ with $\gamma_{e_2}$ (cf. Figure \ref{fig_embedded}). 
If no inheritance probabilities are given, we assume $\gamma_{e_{1_i}} = \gamma_{e_{2_i}} = \frac{1}{2}$ for all reticulation nodes $r_i$. Moreover, we assign probability one to all tree edges, i.e. the probability $\mathbb{P}(e)$ assigned to an edge $e$ is given by
$$ \mathbb{P}(e) = \begin{cases}
				1, & \text{if $e$ is a tree edge of $\mathcal{N}$;} \\
				\frac{1}{2}, & \parbox[t]{.6\textwidth}{if $e$ is a reticulation edge of $\mathcal{N}$, but does not have an inheritance probability assigned to it;} \\
				\gamma_e, & \parbox[t]{.6\textwidth}{if $e$ is a reticulation edge of $\mathcal{N}$ and $\gamma_e \in (0,1)$ is the inheritance probability assigned to $e$.}
				\end{cases}$$

\noindent
Let $\mathcal{N}$ be a phylogenetic network on $X$ and let $\mathcal{T}$ be a phylogenetic $X$-tree.
We say that $\mathcal{T}$ is embedded in $\mathcal{N}$, or that $\mathcal{N}$ displays $\mathcal{T}$, if $\mathcal{T}$ can be obtained from $\mathcal{N}$ by deleting one of the reticulation edges for each reticulation node and suppressing resulting nodes of indegree 1 and outdegree 1.
We use $\TN$ to denote the (multi)set of all rooted phylogenetic $X$-trees displayed by $\mathcal{N}$. \\
Note that we receive the edge weights of an embedded tree $\mathcal{T} \in \TN$ as follows: for all formerly distinct edges that are melted into a new edge by suppressing nodes of indegree 1 and outdegree 1, we add their edge lengths, while all other edges keep their original weights.
Moreover, note that if there are $k$ reticulation nodes in a rooted binary phylogenetic network $\mathcal{N}$ on a taxon set $X$, then there are at most $2^k$ phylogenetic $X$-trees displayed by $\mathcal{N}$. However, this bound does not have to be sharp (cf. Figure \ref{fig_embedded}). \\
In the following we will also need the probability $\mathbb{P}(\mathcal{T})$ of an embedded tree, which is calculated as follows: 
	\begin{enumerate}
	\item For all $\mathcal{T} \in \TN$ calculate the unscaled probability
		$$ \mathbb{P}_{\text{unscaled}}(\mathcal{T}) = \prod\limits_{\substack{e: \, e \text{ is reticulation edge} \\ \text{and is kept in constructing } \mathcal{T}}} \gamma_e,$$
		where $\gamma_e$ is the inheritance probability associated with $e$.
	\item Set $p \coloneqq \sum\limits_{\mathcal{T} \in \TN} \mathbb{P}_{\text{unscaled}}(\mathcal{T})$ (scaling factor).
	\item Calculate the probability 
	$$ \mathbb{P}(\mathcal{T}) = \frac{1}{p} \cdot \mathbb{P}_{\text{unscaled}}(\mathcal{T}).$$
	\end{enumerate}
Here, the scaling factor $p$ ensures that the probabilities of all embedded trees sum up to one.
\begin{example}\label{prob_tree}
Consider the phylogenetic network $\mathcal{N}$ on $X=\{A,B,C,D\}$ and its embedded trees $\mathcal{T}_1, \, \mathcal{T}_2$ and $\mathcal{T}_3$. We have $\mathbb{P}_{\text{unscaled}}(\mathcal{T}_1) = \frac{2}{3} \cdot \frac{3}{4} = \frac{1}{2}$ and analogously $\mathbb{P}_{\text{unscaled}}(\mathcal{T}_2) = \frac{1}{6}$ and $\mathbb{P}_{\text{unscaled}}(\mathcal{T}_3) = \frac{1}{4}$. Thus, $p = \frac{1}{2} + \frac{1}{6} + \frac{1}{4}$ and we retrieve the following probabilities of the embedded trees:  $\mathbb{P}(\mathcal{T}_1) = \frac{12}{11} \cdot \frac{1}{2} = \frac{6}{11}$ and analogously $\mathbb{P}(\mathcal{T}_2) = \frac{2}{11}$ and $\mathbb{P}(\mathcal{T}_3) = \frac{3}{11}$.
\end{example}

\begin{figure}[htbp]
	\centering
	\includegraphics[scale=0.6]{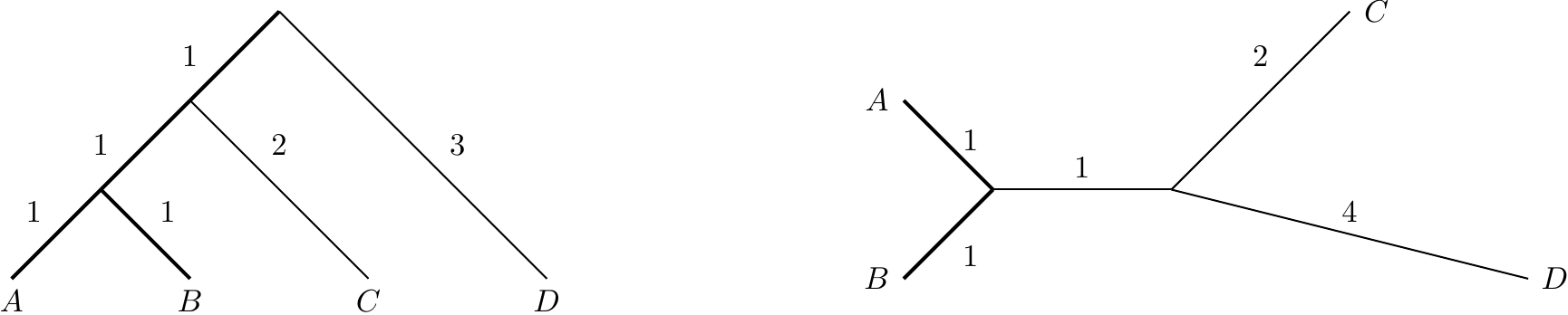}
	\caption{Rooted phylogenetic network $\mathcal{N}$ on $X=\{A,B,C,D\}$ with edge lengths associated with the tree edges (solid) and inheritance probabilities associated with the reticulation edges (dashed).  $\mathcal{N}$ displays the phylogenetic $X$-trees $\mathcal{T}_1, \mathcal{T}_2$ and $\mathcal{T}_3$. When deleting exactly one reticulation edge for each of the two reticulation nodes $r_1$ and $r_2$ in $\mathcal{N}$, we also obtain tree $\mathcal{T}_4$, in which the internal node $w$ of $\mathcal{N}$ has become a leaf. However, we do not regard $\mathcal{T}_4$ as a phylogenetic $X$-tree displayed by $\mathcal{N}$, because $w$ does not belong to taxon set $X$. Thus, in this case we have $\TN = \{\mathcal{T}_1, \mathcal{T}_2, \mathcal{T}_3\}$. \\
	The bold edges in $\mathcal{T}_1$ represent the edges contributing to the phylogenetic diversity of $S=\{A,B\}$ calculated in Example \ref{example_pd}.}
	\label{fig_embedded}
\end{figure}

\noindent
For a phylogenetic network $\mathcal{N}$ and a node $u$ of $\mathcal{N}$ that is not the root, we call  any node $v$ that lies on all directed paths from the root to $u$ a \emph{stable ancestor} of $u$. The so-called \emph{lowest stable ancestor} of $u$ is defined as the last node $lsa(u)$ that is contained on all paths from the root to $u$, excluding $u$. 
Based on this terminology we can define the \emph{lowest stable ancestor tree or \emph{LSA tree} (cf. \citet{Huson2011}, p. 140)} associated with a network.
Let $\mathcal{N}$ be rooted phylogenetic network on $X$. The LSA tree $\mathcal{T}_{LSA}(\mathcal{N})$ associated with $\mathcal{N}$ is a rooted phylogenetic $X$-tree that can be computed as follows: 
For each reticulation node $r$ in $\mathcal{N}$, remove all edges directed into $r$ and add a new edge $e=(lsa(r),r)$ from the lowest stable ancestor of $r$ into $r$. Then repeatedly remove all unlabeled leaves and nodes with in- and outdegree 1, until no further such removal is possible.
Note that the LSA tree associated with a binary rooted phylogenetic network is not necessarily a binary phylogenetic tree (cf. Figure \ref{fig_lsatree}). 
Note that every node $v$ in a phylogenetic network $\mathcal{N}$ has a unique lowest stable ancestor $lsa(v)$. Thus, the LSA tree associated with a given network is the same regardless of the order that the reticulation nodes are processed in. Moreover, note that the concept of a lowest stable ancestor is not new, but has long been used in the theory of flow graphs, where the lowest stable ancestor $lsa(v)$ of a node $v$ is called the \emph{immediate dominator} of $v$ and the LSA tree is called the \emph{dominator tree} of the flow graph (cf. \citet{Lengauer1979}).
\\
In order to use the LSA tree for subsequent phylogenetic diversity calculations, we have to infer edge lengths for the edges of the LSA tree. For all tree edges of $\mathcal{N}$ that are also present in $\mathcal{T}_{LSA}(\mathcal{N})$, we use their original edge weights. If during the removal of nodes of in-and outdegree 1 two formerly distinct tree edges of $\mathcal{N}$ are melted into a new edge in $\mathcal{T}_{LSA}(\mathcal{N})$, we add their original edge lengths. 
For all newly established edges $e=(lsa(r),r)$ between a reticulation node $r$ and its lowest stable ancestor, we suggest to set the length of these edges to the average path length of a path between $lsa(r)$ and $r$, respectively, i.e. we set
$$ \lambda_{e=(lsa(r),r)} \coloneqq \frac{1}{\vert \mathcal{P}_r \vert} \sum\limits_{P \in \mathcal{P}_r} length(P),$$
where $\mathcal{P}_r$ is the set of all $lsa(r)$-$r$-paths $P$ in $\mathcal{N}$ and the length of any such path is obtained by adding the edge lengths of all edges that are part of this path (cf. Figure \ref{fig_lsatree}).

\begin{remark}
Note that instead of using the average path length between a reticulation node $r$ and its lowest stable ancestor $lsa(r)$ in order to infer a weight for the edge $e = (lsa(r),r)$, we could also use the length of a shortest path, the length of a most likely path or a weighted average path length, where each path $P$ is weighted according to its probability 
$$ \mathbb{P}(P) = \prod\limits_{e: \, e \text{ is edge of } P} \mathbb{P}(e).$$
\end{remark}

\begin{figure}[htbp]
	\centering
	\includegraphics[scale=0.6]{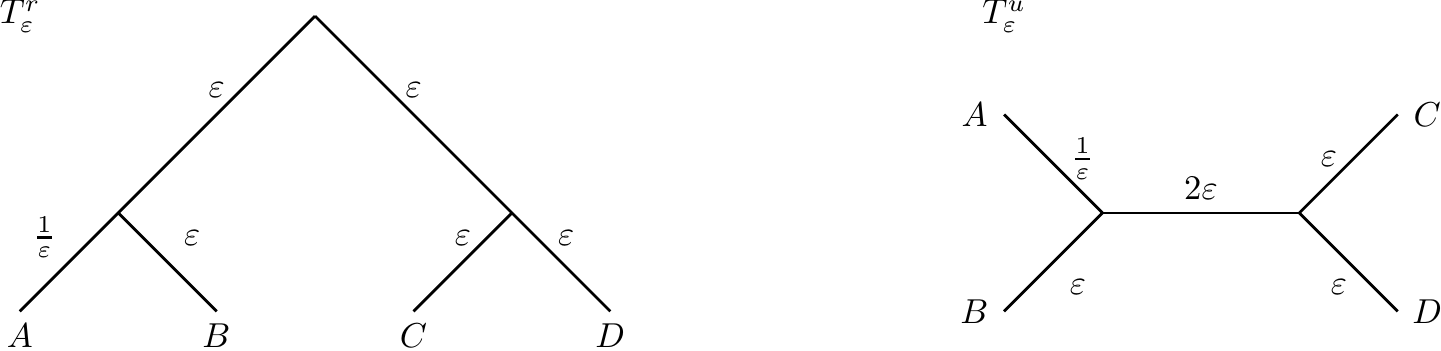}
	\caption{Rooted binary phylogenetic network $\mathcal{N}$ on $X=\{A,B,C,D\}$ and its associated LSA tree $\mathcal{T}_{LSA}(\mathcal{N})$. Note that the reticulation edges (dashed) of $\mathcal{N}$ have weight zero. The node $v$ is the lowest stable ancestor of the reticulation node $r_1$ and we have to consider two paths when calculating the length of the edge $e=(lsa(r_1),r_1)$: $P_1 = ((v,u),(u,r_1))$ with $length(P_1)=1+0=1$ (recall that we have defined the lengths of reticulation edges to be zero) and $P_2 = ((v,w),(w,r_1))$ with length $length(P_2)=1+0=1$. Thus, taking the average, we set $length((lsa(r_1),r_1)) \coloneqq 1$. Analogously, node $\rho$ is the lowest stable ancestor of $r_2$ and we have to consider the paths $P_3=((\rho,v)(v,w)(w,r_2))$ with $length(P_3)=1+1+0=2$ and $P_4=((\rho,x),(x,r_2))$ with $length(P_4)=2+0=2$. Thus, we set $length((lsa(r_2),r_2)) \coloneqq 2$. However, subsequently the edges $(v,r_1)$ and $(r_1,B)$ are merged into a new edge $(v,B)$ of length $1+1=2$ and analogously, the edges $(\rho,r_2)$ and $(r_2,C)$ are replaced by a new edge $(\rho,C)$ of length $2+1=3$ to finally yield the LSA tree associated with $\mathcal{N}$. Note that $\mathcal{T}_{LSA}(\mathcal{N})$ is not binary, because the root $\rho$ has degree $3$. } 
	\label{fig_lsatree}
\end{figure}

\subsection{Phylogenetic diversity and phylogenetic diversity indices on trees}
In this section we briefly recapitulate the concept of phylogenetic diversity and phylogenetic diversity indices, in particular the Shapley Value and the Fair Proportion Index, for phylogenetic trees.

\begin{definition}[Phylogenetic diversity] \label{def_pd}
Let $\mathcal{T}$ be a rooted phylogenetic tree with leaf set $X$.
For a subset $S \subseteq X$ of taxa, the \emph{phylogenetic diversity} $PD(S)$ is calculated by summing up the edge lengths of the phylogenetic subtree of $\mathcal{T}$ containing $S$ and the root (i.e., we consider the sum of edge lengths in the smallest spanning tree containing $S$ and the root).
\end{definition}

\begin{example} \label{example_pd}
Consider the phylogenetic tree $\mathcal{T}_1$ on $X=\{A,B,C,D\}$ depicted in Figure \ref{fig_embedded}. Now consider the subset $S=\{A,B\} \subseteq X$ of taxa. Then the \emph{phylogenetic diversity} of $S$ calculates as $PD(S) = 2 + 1 + 1 + 1 = 5$.
\end{example}

\noindent
Based on phylogenetic diversity, we can now define the Shapley Value for phylogenetic trees. The Shapley Value for phylogenetic trees is used in different versions in the literature (cf. \citet{Wicke2015}), but we will use the so-called original Shapley Value throughout this paper.

\begin{definition}[Original Shapley Value] \label{def_sv}
Let $\mathcal{T}$ be a rooted phylogenetic tree with leaf set $X$ and let $PD(S)$ denote the phylogenetic diversity of $S \subseteq X$. Then the Shapley Value for a taxon $a \in X$ is defined as
\begin{equation}
	SV_{\mathcal{T}}(a) = \frac{1}{n!}\sum_{\substack{S \subseteq X \\ a \in S}}(\lvert S \rvert -1)!(n- \lvert S \rvert)!(PD(S)-PD(S \setminus \{a\})),
\end{equation}
where $n = \lvert X \rvert$ and $S$ denotes a subset of species containing taxon $a$ (also sometimes referred to as `coalition') and the sum runs over all such coalitions possible. 
\end{definition}

\noindent
While the Shapley Value reflects the average contribution of a species to overall phylogenetic diversity and is thus a sensible prioritization criterion, its calculation is complicated. 
Therefore another index, the so-called Fair Proportion Index, has been introduced. 

\begin{definition}[Fair Proportion Index] \label{def_fp}
For a rooted phylogenetic tree $\mathcal{T}$ with leaf set $X$ the Fair Proportion Index of a taxon $a$ is defined as
	\begin{equation} 
	FP_{\mathcal{T}}(a) = \sum_{e} \frac{\lambda_{e}}{D_{e}},
	\end{equation}
where the sum runs over all edges $e$ on the path from $a$ to the root and $D_e$ denotes the number of leaves descendent from that edge.
\end{definition}

\noindent
The Fair Proportion Index can easily be calculated, but lacks a biological motivation. However, its use has been justified by its equivalence with the original Shapley Value.

\begin{theorem}[\citet{Fuchs2015}] \label{fuchs_theorem}
Let $\mathcal{T}$ be a rooted phylogenetic tree with leaf set $X$. Then we have for all $a \in X:$
$$SV_{\mathcal{T}}(a) = FP_{\mathcal{T}}(a).$$
\end{theorem}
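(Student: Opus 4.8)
The plan is to use linearity in the edge lengths. Both $SV_{\mathcal{T}}(a)$ and $FP_{\mathcal{T}}(a)$ are linear functionals of the vector $(\lambda_e)_e$ of edge lengths: this is immediate for $FP$ from its definition, and for $SV$ it follows because each $PD(S)$ is a sum of certain $\lambda_e$'s (Definition \ref{def_pd}), hence each difference $PD(S)-PD(S\setminus\{a\})$ is linear, hence so is the weighted sum defining $SV_{\mathcal{T}}(a)$. It therefore suffices to show that for every edge $e$ of $\mathcal{T}$ the coefficient of $\lambda_e$ agrees on the two sides.

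On the Fair Proportion side this coefficient is $\frac{1}{D_e}$ if $e$ lies on the path from $a$ to the root $\rho$, and $0$ otherwise. For the Shapley side, fix an edge $e$, write $C_e\subseteq X$ for the cluster of leaves descendant from $e$ (so $D_e=\lvert C_e\rvert$), and note from Definition \ref{def_pd} that the coefficient of $\lambda_e$ in $PD(S)$ is $1$ if $S\cap C_e\neq\emptyset$ and $0$ otherwise. Hence the coefficient of $\lambda_e$ in $PD(S)-PD(S\setminus\{a\})$ equals $1$ exactly when $a\in C_e$ and $S\cap C_e=\{a\}$ (i.e.\ $a$ is the unique taxon of the coalition $S$ lying below $e$), and equals $0$ otherwise. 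Substituting this into the definition of $SV_{\mathcal{T}}(a)$, the coefficient of $\lambda_e$ becomes
\[
\frac{1}{n!}\sum_{\substack{S\subseteq X,\ a\in S\\ S\cap C_e=\{a\}}}(\lvert S\rvert-1)!\,(n-\lvert S\rvert)!,
\]
which vanishes unless $a\in C_e$, i.e.\ unless $e$ lies on the path from $a$ to $\rho$ --- matching the support of the Fair Proportion coefficient.

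It then remains to evaluate this sum to $\frac{1}{D_e}$ whenever $a\in C_e$. The cleanest route is the classical permutation interpretation of the Shapley weights: for a uniformly random linear ordering of $X$, the quantity $\frac{(\lvert S\rvert-1)!\,(n-\lvert S\rvert)!}{n!}$ is the probability that the taxa preceding $a$ are precisely the members of $S\setminus\{a\}$. Summing over all coalitions $S$ with $a\in S$ and $S\cap C_e=\{a\}$ therefore computes the probability that no taxon of $C_e\setminus\{a\}$ precedes $a$, i.e.\ that $a$ is first among the $D_e$ members of $C_e$; by symmetry this probability is $\frac{1}{D_e}$. (A direct alternative: put $j=\lvert S\rvert-1$, note that $S\setminus\{a\}$ ranges over the $\binom{n-D_e}{j}$ size-$j$ subsets of $X\setminus C_e$, and evaluate $\sum_j\binom{n-D_e}{j}j!\,(n-1-j)!=\frac{n!}{D_e}$ via a hockey-stick identity.) Combining the two coefficient computations edge by edge and summing over the edges $e$ on the path from $a$ to $\rho$ yields $SV_{\mathcal{T}}(a)=FP_{\mathcal{T}}(a)$. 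The only genuine work is this combinatorial evaluation of the weighted sum; via the permutation/symmetry argument it reduces to a one-line observation, so I do not expect a serious obstacle.
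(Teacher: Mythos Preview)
Your argument is correct. Note, however, that the paper does not supply its own proof of this theorem: it is stated with attribution to \citet{Fuchs2015} and used as a black box thereafter, so there is no in-paper proof to compare against.

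For what it is worth, your approach---linearity in the edge lengths, identifying the coefficient of $\lambda_e$ on each side, and evaluating the Shapley-side coefficient via the uniform-permutation interpretation (so that the sum collapses to the probability that $a$ appears first among the $D_e$ leaves below $e$)---is exactly the standard route and is essentially how the result is established in the cited reference. The alternative hockey-stick computation you sketch is also valid; both reduce to showing $\sum_{j=0}^{n-D_e}\binom{n-D_e}{j}j!\,(n-1-j)!=\dfrac{n!}{D_e}$. There is no gap.
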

\begin{example}
Consider the phylogenetic tree $\mathcal{T}_1$ on $X=\{A,B,C,D\}$ depicted in Figure \ref{fig_embedded}. Here, we have $FP_{\mathcal{T}_1}(A) = \frac{1}{3} + \frac{2}{1} = \frac{7}{3}, \, FP_{\mathcal{T}_1}(B) = \frac{1}{3} + \frac{1}{2} + \frac{1}{1} = \frac{11}{6}, \, FP_{\mathcal{T}_1}(C) = \frac{1}{3} + \frac{1}{2} + \frac{1}{1} = \frac{11}{6}$ and $ FP_{\mathcal{T}_1}(D) = \frac{3}{1} = 3$. Note that $FP_{\mathcal{T}_1}(A) + FP_{\mathcal{T}_1}(B) + FP_{\mathcal{T}_1}(C) + FP_{\mathcal{T}_1}(D) = 9$, which equals the total sum of all edge lengths in $\mathcal{T}_1$. Also note that the Fair Proportion Indices of $\mathcal{T}_1$ equal the Shapley Values of $\mathcal{T}_1$.
\end{example}

\section{Generalization of phylogenetic diversity}
We are now in the position to present our approaches towards the generalization of phylogenetic diversity from trees to networks. We will introduce three approaches, one based on the calculation of spanning arborescences and subgraphs of a network, one based on the set of trees displayed by a network and one based on the LSA tree associated with a network.

\subsection{Phylogenetic (sub)net diversity}
Recall that the phylogenetic diversity of a subset $S \subseteq X$ of taxa of a phylogenetic $X$-tree $\mathcal{T}$ was calculated as the sum of branch lengths of the subtree of $\mathcal{T}$ containing $S$ and the root. For a phylogenetic network $\mathcal{N}$ on $X$ and a subset $S \subseteq X$ of taxa, there may be more than one subtree, or to be precise, more than one arborescence (because a phylogenetic network is a directed graph) containing $S$ and the root. Thus, we suggest to consider an arborescence of minimum cost, i.e. an arborescence whose weight (the sum of its branch lengths) is no larger than the weight of any other arborescence spanning $S$ and the root, and introduce the so-called \emph{phylogenetic net diversity}.

\begin{definition}[Phylogenetic net diversity]  \label{def_pnd}
Let $\mathcal{N}$ be a rooted phylogenetic network on some taxon set $X$. 
For a subset $S \subseteq X$ of taxa we define the \emph{phylogenetic net diversity} $PND(S)$ of $S$ as the sum of branch lengths in a minimum cost arborescence containing $S$ and the root.
\end{definition}

\noindent
Note that determining the minimum cost arborescence containing a subset $S \subseteq X$ of taxa and the root is formally an instance of the so-called \emph{directed Steiner tree problem} or \emph{Steiner arborescence problem}, which, in general, is an $NP$-hard problem (\citet{Karp1972}). \\

\noindent 
In order to explicitly incorporate the inheritance probabilities of a network into the calculation of phylogenetic net diversity, several alterations of Definition \ref{def_pnd} are possible. Instead of considering a minimum cost arborescence spanning the taxa in $S$ and the root, we could consider all arborescences spanning $S$ and the root and weight them according to their probability or use a most likely arborescence. We denote these values by $PND^{inh}$ and $PND^{ML}$, i.e.
$$ PND^{inh}(S)  = \sum\limits_{A \in \mathcal{A}_S} \mathbb{P}(A) \cdot weight(A),$$
where $\mathcal{A}_S$ denotes the set of all arborescences spanning $S$ and the root, $weight(A)$ is the sum of branch lengths of any such arborescence $A$ and 
$$\mathbb{P}(A) = \prod\limits_{e: \, e \text{ is edge of } A} \mathbb{P}(e)$$
denotes its probability. Moreover, 
$$ PND^{ML} = weight(A'), \text{ where } A' = \argmax\limits_{A \in \mathcal{A}_S} \mathbb{P}(A).$$
If the argmax is not unique, we choose one of the most likely arborescences of minimum cost.\footnote{Alternatively, we could arbitrarily choose one of the most likely arborescences. However, choosing an arborescence of minimum cost makes the results reproducible.}

\begin{figure}[htbp]
	\centering
	\includegraphics[scale=0.6]{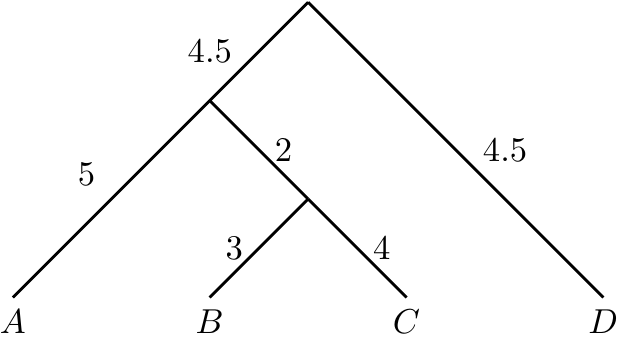}
	\caption{Rooted binary phylogenetic network $\mathcal{N}$ on $X = \{A,B,C,D\}$ and arborescences $A_1$ and $A_2$ containing $S=\{A,B\}$ and the root. \\
	The bold edges in $\mathcal{N}$ depict the subgraph $N_{\{A,B\}}$ of $\mathcal{N}$ containing $\rho$ and $S=\{A,B\}$. Note that all reticulation edges (dashed) have weight zero.}
	\label{fig_arborescences}
\end{figure}

\begin{example}
Consider Figure \ref{fig_arborescences}, which depicts the rooted phylogenetic network $\mathcal{N}$ on $X=\{A,B,C,D\}$ and the two arborescences $A_1$ and $A_2$ containing $S=\{A,B\}$ and the root. $A_1$ has weight $1+1+2=4$, while $A_2$ has weight $2+1+1+1=5$. Thus, $A_1$ is the \emph{minimum cost arborescence} containing $S=\{A,B\}$ and the root and we retrieve the \emph{phylogenetic net diversity} of $S = \{A,B\}$ as $PND(\{A,B\}) = 4$. 
However, $A_1$ has probability $\frac{1}{3}$ and $A_2$ has probability $\frac{2}{3}$, i.e. $A_2$ is the most likely arborescence spanning $S$ and the root. Thus, $PND^{ML}(\{A,B\}) = weight(A_2) = 5$. Moreover, $PND^{inh}(\{A,B\}) = \frac{1}{3} \cdot 4 + \frac{2}{3} \cdot 5 = \frac{14}{3}$.
\end{example}

\noindent Instead of using spanning arborescences to define the phylogenetic diversity of a subset $S \subseteq X$ of taxa of a phylogenetic network $\mathcal{N}$ on $X$, we can also consider the subgraph $\mathcal{N}_S \subseteq \mathcal{N}$ containing the root of $\mathcal{N}$ and $S$ and define the phylogenetic diversity of $S$ as the sum of branch lengths in $\mathcal{N}_S$.

\begin{definition}[Phylogenetic subnet diversity]
Let $\mathcal{N}$ be a rooted phylogenetic network on some taxon set $X$. 
For a subset $S \subseteq X$ of taxa consider the subgraph $\mathcal{N}_S$ of $\mathcal{N}$ containing the root of $\mathcal{N}$ and the taxa in $S$ (i.e., $\mathcal{N}_S$ is the subgraph of $\mathcal{N}$ containing all nodes and edges that lie on at least one path from the root of $\mathcal{N}$ to any of the leaves in $S$). Then we define the \emph{phylogenetic subnet diversity} $PSD(S)$ of $S$ as the sum of branch lengths in $\mathcal{N}_S$.
\end{definition}

\begin{example}
Consider the rooted phylogenetic network $\mathcal{N}$ on $X=\{A,B,C,D\}$ depicted in Figure \ref{fig_arborescences} and set $S=\{A,B\}$. Then the subgraph $\mathcal{N}_S$ of $\mathcal{N}$ (highlighted with bold lines) has length $1+1+1+1+1=5$ and thus, $PSD(\{A,B\}) = 5$.
\end{example}

\subsection{Embedded phylogenetic diversity}
If species are subject to hybridization or horizontal gene transfer, their genome contains parts of the genome of both its ancestors. However, evolution at the nucleotide level rather than the genome level is still treelike, because a single nucleotide can always be traced back to one parent. 
Therefore, we suggest to consider the set of trees embedded in a network as an alternative approach towards the generalization of phylogenetic diversity from trees to networks.

\begin{definition}[Embedded phylogenetic diversity] \label{def_embedded_pd}
Let $\mathcal{N}$ be a rooted phylogenetic network on some taxon set $X$ and let $\TN$ be the (multi)set of all rooted phylogenetic $X$-trees displayed by $\mathcal{N}$. 
Then we use $PD_{\TN}^{\ast}(S)$ to denote the \emph{embedded phylogenetic diversity} of a subset $S \subseteq X$ of taxa, where $\ast$ is one of the following functions $\min, \max, \sum, \varnothing$
and define 
\begin{align}
PD_{\TN}^{\min}(S) &\coloneqq \min_{\mathcal{T} \in \TN} \{ PD_{\mathcal{T}}(S) \}, \\
PD_{\TN}^{\max}(S) &\coloneqq \max_{\mathcal{T} \in \TN} \{ PD_{\mathcal{T}}(S) \}, \\
PD_{\TN}^{\sum} (S) &\coloneqq \sum_{\mathcal{T} \in \TN} PD_{\mathcal{T}}(S) \text{ and } \\
PD_{\TN}^{\varnothing}(S) &\coloneqq \frac{1}{\vert \TN \vert} \sum_{\mathcal{T} \in \TN} PD_{\mathcal{T}}(S), \label{average_embedded_pd}
\end{align}
where $\vert \TN \vert$ is the number of phylogenetic $X$-trees displayed by $\mathcal{N}$.
If inheritance probabilities are given for $\mathcal{N}$, we also consider
\begin{align}
PD_{\TN}^{\varnothing_{inh}}(S) &\coloneqq  \sum_{\mathcal{T} \in \TN} \mathbb{P}(\mathcal{T}) \cdot PD_{\mathcal{T}}(S) \text{ and } \\
PD_{\TN}^{ML}(S) &\coloneqq PD_{\mathcal{T}'}(S) \text{ with } \mathcal{T}' = \argmax_{\mathcal{T} \in \TN} \mathbb{P}(\mathcal{T}),
\end{align}
where $\mathbb{P}(\mathcal{T})$ is the probability of $\mathcal{T}$ and $\mathcal{T}'$ is a most likely embedded tree. 
If the argmax is not unique, we arbitrarily choose one of the embedded trees with maximum probability.
\end{definition}

\noindent
Note that $\ast$ can be replaced by other functions on the phylogenetic diversity of the trees in $\TN$, but we will only consider $\min, \max, \sum, \varnothing$ and $\varnothing_{inh}$ as defined above. \\
Also note that we will only consider phylogenetic $X$-trees as elements of $\TN$ and discard all other trees that may occur when decomposing the network into a set of trees (cf. Figure \ref{fig_embedded}).

\begin{example}
Consider the rooted phylogenetic network $\mathcal{N}$ on $X=\{A,B,C,D\}$ and its embedded trees $\mathcal{T}_1, \mathcal{T}_2$ and $\mathcal{T}_3$ depicted in Figure \ref{fig_embedded}. Now set $S=\{A,B\} \subseteq X$. Then we have $PD_{\mathcal{T}_1}(S) = 5, PD_{\mathcal{T}_2}(S) = 5$ and $PD_{\mathcal{T}_3}(S) = 4$.
Moreover, $\mathbb{P}(\mathcal{T}_1) = \frac{6}{11}, \, \mathbb{P}(\mathcal{T}_2) = \frac{2}{11}$ and $\mathbb{P}(\mathcal{T}_3) = \frac{3}{11}$.
Thus, we retrieve the different values of the \emph{embedded phylogenetic diversity} of $S=\{A,B\}$ as
$PD_{\TN}^{\min}(S) = 4, \, PD_{\TN}^{\max}(S) = 5, \, PD_{\TN}^{\sum}(S) = 14, \, PD_{\TN}^{\varnothing}(S) = \frac{14}{3}, \, PD_{\TN}^{\varnothing_{inh}}(S) = \frac{52}{11}$ and $PD_{\TN}^{ML}(S) = 5$.
\end{example}

\subsection{Relationship between the phylogenetic net diversity and the embedded phylogenetic diversity}
\noindent
Comparing the phylogenetic net diversity $PND$ and the minimum embedded phylogenetic diversity $PD_{\TN}^{\min}$ for a subset $S \subseteq X$ of taxa, we see that they use a similar principle. While $PND(S)$ is defined as the weight of a minimum cost arborescence spanning $S$ and the root in a network $\mathcal{N}$, $PD_{\TN}^{\min}$ is defined as the weight of a minimum spanning tree/minimum cost arborescence spanning $S$ and the root in the set $\TN$ of phylogenetic $X$-trees displayed by $\mathcal{N}$. 
Thus, the two measures are related, but in general they are not identical.
Consider, for example the rooted phylogenetic network $\mathcal{N}$ depicted in Figure \ref{fig_embedded} and set $S=\{A,B,C,D\}$. Then, we have $PD_{\TN}^{\min}(S) = 9$, while $PND(S) = 8$. \\
However, we have the following relationship between $PND$ and $PD_{\TN}^{\min}$:

\begin{proposition}\label{PND_equality}
Let $\mathcal{N}$ be a binary rooted phylogenetic network on a taxon set $X$ with $k$ reticulation nodes and let $\TN$ be the set of phylogenetic $X$-trees displayed by $\mathcal{N}$.
	\begin{enumerate}
	\item We have
		\begin{equation} \label{leq}
		PND(S) \leq  PD_{\mathsf{T}(\mathcal{N})}^{\min}(S)
		\end{equation}
		for all subsets $S \subseteq X$ of taxa.
	\item If $\vert \TN \vert = 2^{k}$, i.e. if all combinations of removing one reticulation edge for each reticulation node and suppressing nodes of both indegree 1 and outdegree 1 result in a phylogenetic $X$-tree, we have
		\begin{equation}
		PND(S) =  PD_{\mathsf{T}(\mathcal{N})}^{\min}(S).
		\label{eq}
		\end{equation}
	\end{enumerate}
\end{proposition}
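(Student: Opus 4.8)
The plan is to prove both parts by exhibiting, for a given subset $S \subseteq X$, explicit relationships between minimum cost arborescences spanning $S$ and the root in $\mathcal{N}$ on the one hand, and in the displayed trees of $\TN$ on the other.

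For part (1), the key observation is that every tree $\mathcal{T} \in \TN$ is obtained from $\mathcal{N}$ by deleting one reticulation edge per reticulation node and suppressing degree-2 nodes, and that this operation never increases the cost of an optimal arborescence spanning $S$ and the root. More precisely, I would fix $\mathcal{T} \in \TN$ and let $A_{\mathcal{T}}$ be a minimum cost arborescence in $\mathcal{T}$ spanning $S$ and the root, so that $\mathrm{weight}(A_{\mathcal{T}}) = PD_{\mathcal{T}}(S)$ (since $\mathcal{T}$ is a tree, this arborescence is just the minimal spanning subtree of Definition~\ref{def_pd}). Now ``lift'' $A_{\mathcal{T}}$ back into $\mathcal{N}$: each edge of $\mathcal{T}$ corresponds either to a tree edge of $\mathcal{N}$ with the same weight, or to a path in $\mathcal{N}$ consisting of one tree edge followed by one reticulation edge (of weight zero) — this is exactly how edge weights of embedded trees are defined in the preliminaries. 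Replacing each edge of $A_{\mathcal{T}}$ by its preimage in $\mathcal{N}$ yields a subgraph $A'$ of $\mathcal{N}$ that still spans $S$ and the root, is still an arborescence (the suppressed nodes have in- and outdegree~1, so no branching is created), and has the same total weight as $A_{\mathcal{T}}$, because the only new edges have weight zero. Hence $PND(S) \leq \mathrm{weight}(A') = PD_{\mathcal{T}}(S)$. Since $\mathcal{T}$ was arbitrary, taking the minimum over $\TN$ gives \eqref{leq}.

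For part (2), I need the reverse inequality under the hypothesis $|\TN| = 2^k$. Let $A^*$ be a minimum cost arborescence in $\mathcal{N}$ spanning $S$ and the root, so $\mathrm{weight}(A^*) = PND(S)$. The point is that an arborescence has a unique directed path from the root to each node, so for every reticulation node $r$ of $\mathcal{N}$ that lies in $A^*$, at most one of its two incoming reticulation edges belongs to $A^*$; for reticulation nodes not in $A^*$, neither incoming edge is used. I would therefore extend $A^*$ to a choice of exactly one reticulation edge per reticulation node of $\mathcal{N}$: where $A^*$ already commits to one, keep it; otherwise pick either edge arbitrarily. This choice determines an embedded tree $\mathcal{T}_0 \in \TN$ — and here is precisely where the hypothesis $|\TN| = 2^k$ is used, since it guarantees that every such choice of reticulation edges yields an honest phylogenetic $X$-tree (no internal node of $\mathcal{N}$ becomes a leaf, as in the $\mathcal{T}_4$ phenomenon of Figure~\ref{fig_embedded}). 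All edges of $A^*$ survive in $\mathcal{T}_0$ (possibly being merged with adjacent edges under suppression of degree-2 nodes), so $A^*$ maps to a connected subgraph of $\mathcal{T}_0$ spanning $S$ and the root, of the same total weight (again because reticulation edges have weight zero, merging only concatenates). Thus $PD_{\mathcal{T}_0}(S) \leq \mathrm{weight}(A^*) = PND(S)$, whence $PD_{\TN}^{\min}(S) \leq PND(S)$. Combining with part (1) gives \eqref{eq}.

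The main obstacle I anticipate is the bookkeeping in the ``lifting'' and ``projecting'' steps: one must check carefully that the correspondence between edges of a displayed tree and edges/short paths of $\mathcal{N}$ behaves well with respect to the suppression of degree-2 nodes, that no cycles or extra in-edges are introduced when lifting, and that projecting $A^*$ into $\mathcal{T}_0$ does not accidentally produce a non-arborescence or drop a needed taxon. The hypothesis $|\TN| = 2^k$ in part (2) is load-bearing exactly because without it the chosen set of reticulation edges might not correspond to a valid $X$-tree, and then the projection target would be ill-defined; the counterexample with $S = \{A,B,C,D\}$ on the network of Figure~\ref{fig_embedded} (where $PND(S) = 8 < 9 = PD_{\TN}^{\min}(S)$) shows the inequality can be strict, so the argument for (2) genuinely cannot avoid using the hypothesis.
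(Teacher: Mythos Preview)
Your proposal is correct and follows essentially the same strategy as the paper's own proof: for part (1) you lift the minimal spanning arborescence from a displayed tree back into $\mathcal{N}$ (the paper phrases this as ``every displayed tree is a subgraph of $\mathcal{N}$ once suppressed nodes are reinstated''), and for part (2) you project the optimal arborescence $A^*$ in $\mathcal{N}$ into a displayed tree determined by the reticulation edges it uses, invoking $|\TN|=2^k$ to guarantee this choice yields a valid $X$-tree. One small inaccuracy worth tightening: an edge of a displayed tree need not correspond to just ``one tree edge followed by one reticulation edge'' in $\mathcal{N}$ --- suppression can cascade (e.g.\ a tree node losing an outgoing reticulation edge also becomes degree~2), so the preimage may be a longer path --- but since all reticulation edges have weight zero your weight-preservation claim is unaffected.
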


\begin{remark}
Note that $\vert \TN \vert = 2^k$ for example holds for so-called \emph{normal} networks (cf. \citet{Iersel2010}).
\end{remark}

\begin{proof}[Proof of Proposition \ref{PND_equality}]\label{equality_proof}
Let $\mathcal{N}$ be a binary rooted phylogenetic network with root $\rho$, taxon set $X$ and $k$ reticulation nodes.
Let $\TN$ be the set of embedded trees and let $R(\mathcal{N}) = \{r \, \vert \, r \text{ is a reticulation node of } \mathcal{N}\}$ be the set of reticulation nodes of $N$. 
	\begin{enumerate}
	\item We show $PD_{\mathsf{T}(\mathcal{N})}^{\min}(S) \geq PND(S)$. \\
		  For every $\mathcal{T} \in \TN$ the phylogenetic diversity of a subset $S \subseteq X$ of taxa is defined as the sum of branch lengths in the smallest arborescence spanning the taxa in $S$ and the root. 
Clearly, the weight of any such arborescence cannot be smaller than the weight of a minimum cost arborescence spanning $S$ and the root in $\mathcal{N}$ (all $\mathcal{T} \in \TN$ are \enquote{subgraphs} of $\mathcal{N}$, thus, any smallest arborescence spanning $S$ and the root in a displayed tree $\mathcal{T} \in \TN$ can also be found in $\mathcal{N}$).\footnote{Formally, we have to re-establish the nodes of in- and outdegree 1 that were removed during the construction of $\mathcal{T} \in \TN$ to make $\mathcal{T}$ a subgraph of $\mathcal{N}$. However, this does not affect the weights.} In particular, we have
$$ \min\limits_{\mathcal{T} \in \TN} \{ PD_{\mathcal{T}}(S) \} = PD_{\mathsf{T}(\mathcal{N})}^{\min}(S) \geq PND(S).$$ 

	\item Now, suppose that $\vert \TN \vert = 2^{k}$. 
		  We want to show that $PND(S) =  PD_{\mathsf{T}(\mathcal{N})}^{\min}(S)$.
		  As we have $PND(S) \leq PD_{\mathsf{T}(\mathcal{N})}^{\min}(S)$ (Equation \eqref{leq}), it suffices to show $PND(S) \geq PD_{\mathsf{T}(\mathcal{N})}^{\min}(S)$. \\
		  Let $A_S$ be the minimum cost arborescence spanning $S$ and the root in $\mathcal{N}$.
By definition of an arborescence there is exactly one directed path from the root $\rho$ to any other vertex $v \in V(A_S)$. This implies that $A_S$ contains at most one reticulation edge for each reticulation node $r \in R(\mathcal{N})$, but never both reticulation edges directed into $r \in R(\mathcal{N})$. 
If we now suppress nodes of both indegree 1 and outdegree 1 in $A_S$ and add the weights of the edges which are merged into one edge by doing so, we retrieve a directed acyclic graph $A_S'$, which contains the taxa in $S$ and whose weight equals the weight of $A_S$.
By the construction of $A_S'$, however, $A_S'$ must be a sub-arborescence of some embedded tree $\mathcal{T}_{A_{S}} \in \mathsf{T}(\mathcal{N})$, where the set of embedded trees is obtained by deleting one of the reticulation edges for each reticulation node and suppressing the resulting nodes of indegree 1 and outdegree 1, and every combination of doing so results in a phylogenetic $X$-tree (because we have assumed $\vert \TN \vert = 2^k$).  
Thus, by definition of $PD$ for trees, the weight of $A_S$ equals $PD_{\mathcal{T}_{A_S}}(S)$ and as $\mathcal{T}_{A_{S}}$ is embedded in $\mathcal{N}$ we have
$$PND(S) = PD_{\mathcal{T}_{A_S}} (S) \geq \min\limits_{\mathcal{T} \in \TN} \{ PD_{\mathcal{T}}(S) \} = PD_{\mathsf{T}(\mathcal{N})}^{\min}(S).$$ 
	Combining the above, we have $PND(S) =  PD_{\mathsf{T}(\mathcal{N})}^{\min}(S)$ as claimed.
	\end{enumerate}
\end{proof}

\noindent Comparing $PND^{inh}(S)$ and $PD_{\TN}^{\varnothing_{inh}}(S)$ of a subset $S \subseteq X$ of taxa, we see that these values, again, follow a related principle. While $PND^{inh}(S)$ considers all spanning arborescences in the network, $PD_{\TN}^{\varnothing_{inh}}(S)$ considers the spanning arborescences in each of the trees displayed by $\mathcal{N}$.
If $\vert \TN \vert = 2^k$, the two values coincide (proof similar to the proof of Proposition \ref{PND_equality}). However, in general, $PND^{inh}(S) \neq PD_{\TN}^{\varnothing_{inh}}(S)$, in particular we cannot guarantee $PND^{inh}(S) \leq PD_{\TN}^{\varnothing_{inh}}(S)$ as in Proposition \ref{PND_equality}. Consider for example the phylogenetic network $\mathcal{N}$ depicted in Figure \ref{fig_embedded} and set $S=\{A,C\}$. Then we have 
$$ PD_{\TN}^{\varnothing_{inh}}(S) = \frac{6}{11} \cdot 5 + \frac{2}{11} \cdot 6 + \frac{3}{11} \cdot 5 = \frac{57}{11},$$
but 
$$ PND^{inh}(S) = \frac{3}{4} \cdot 5 + \frac{1}{4} \cdot 6 = \frac{21}{4}.$$
Thus, $ PND^{inh}(S) > PD_{\TN}^{\varnothing_{inh}}(S)$.

\subsection{LSA associated phylogenetic diversity}
As it can be difficult to determine the set of phylogenetic $X$-trees displayed by a network $\mathcal{N}$ on $X$, we now consider the LSA tree associated with a network. The LSA tree can be seen as a way to summarize the treelike content of a phylogenetic network, on which all its embedded trees agree, without explicitly having to consider these trees.

\begin{definition}[LSA associated phylogenetic diversity]
Let $\mathcal{N}$ be a rooted phylogenetic network on some taxon set $X$. 
Let $S \subseteq X$ be a subset of taxa. 
Then we define the \emph{LSA associated phylogenetic diversity} $PD^{LSA}(S)$ as
\begin{equation}
PD^{LSA}(S) \coloneqq PD_{T_{LSA}(\mathcal{N})}(S),
\end{equation}
where $PD_{\mathcal{T}_{LSA}(\mathcal{N})} (S)$ is the phylogenetic diversity of $S$ in the LSA tree $\mathcal{T}_{LSA}(\mathcal{N})$ associated with $\mathcal{N}$.
\end{definition}

\begin{example}
Consider the rooted phylogenetic network $\mathcal{N}$ and its associated LSA tree $\mathcal{T}_{LSA}(\mathcal{N})$ depicted in Figure \ref{fig_lsatree}. Exemplarily, we set $S=\{A,B\}$ and retrieve the \emph{LSA associated phylogenetic diversity} of $S$ as $PD^{LSA}(S) = 2 + 2 + 1 = 5$.
\end{example} 

\noindent
We have introduced a variety of ways to define the phylogenetic diversity of a subset $S \subseteq X$ of taxa in a network.
However, the information about the phylogenetic diversity of a subset $S \subseteq X$ of taxa in itself is not very useful for taxon prioritization decisions.
Thus, we now turn our attention towards the generalization of phylogenetic diversity indices from trees to networks.

\section{Generalization of phylogenetic diversity indices}
After proposing different ways of generalizing the concept of phylogenetic diversity from trees to networks, we will now turn our attention to the Fair Proportion Index and the Shapley Value, two prioritization indices used in biodiversity conservation.
Even though the Fair Proportion Index and the Shapley Value are equivalent for rooted phylogenetic trees (\citet{Fuchs2015}), they differ significantly in their definition and computation.
While the Fair Proportion Index is directly based on a given rooted phylogenetic tree (cf. Definition \ref{def_fp}), the definition of the Shapley Value is based on the phylogenetic diversity of subsets of taxa, and thus, only indirectly on a given phylogenetic tree (cf. Definition \ref{def_sv}). 
To be precise, the calculation of the Shapley Value involves two steps:
	\begin{enumerate}
	\item Calculation of the phylogenetic diversity for all subsets of taxa based on a given phylogenetic tree.
	\item Calculation of the Shapley Value for all taxa based on the phylogenetic diversity calculated in step 1. 
	\end{enumerate}
This implies that we have two possibilities when extending the Shapley Value from trees to networks: We can either use any generalized definition of phylogenetic diversity (e.g. the phylogenetic net diversity, the embedded phylogenetic diversity or the LSA associated phylogenetic diversity) introduced above and calculate the Shapley Value based on this measure, or we can reduce the network to its treelike content (e.g. via the set of embedded trees or the LSA tree) and calculate the Shapley Value based on these trees.
We will, however, start with the reduction of a network to its treelike content, which is also used to generalize the Fair Proportion Index to networks.

\subsection{Embedded Shapley Value and Fair Proportion Index}
Similar to the embedded phylogenetic diversity, we will now use the set $\TN$ of phylogenetic $X$-trees displayed by a network $\mathcal{N}$ on $X$ in order to define the so-called \emph{embedded Shapley Value} and the \emph{embedded Fair Proportion Index}.

\begin{definition}[Embedded Shapley Value, embedded Fair Proportion Index]
Let $\mathcal{N}$ be a rooted phylogenetic network on some taxon set $X$ and let $\TN$ be the (multi)set of all rooted phylogenetic $X$-trees displayed by $\mathcal{N}$. 
Then we use $DI_{\TN}^{\ast}(a)$ with $DI \in \{SV, FP\}$ to denote the embedded Shapley Value or embedded Fair Proportion Index of a taxon $a \in X$, where $\ast$ stands for $\min, \max, \sum, \varnothing$ and define 
\begin{align}
DI_{\TN}^{\min}(a) &\coloneqq \min_{\mathcal{T} \in \TN} \{ DI_{\mathcal{T}}(a) \}, \\
DI_{\TN}^{\max}(a) &\coloneqq \max_{\mathcal{T} \in \TN} \{ DI_{\mathcal{T}}(a) \}, \\
DI_{\TN}^{\sum}(a) &\coloneqq \sum_{\mathcal{T} \in \TN} DI_{\mathcal{T}}(a) \text{ and } \\
DI_{\TN}^{\varnothing}(a) &\coloneqq \frac{1}{\vert \TN \vert} \sum_{\mathcal{T} \in \TN} DI_{\mathcal{T}}(a), 
\end{align}
where $\vert \TN \vert$ is the number of phylogenetic $X$-trees displayed by $\mathcal{N}$.
If inheritance probabilities are given for $\mathcal{N}$, we also consider
\begin{align}
DI_{\TN}^{\varnothing_{inh}}(a) &\coloneqq  \sum_{\mathcal{T} \in \TN} \mathbb{P}(\mathcal{T}) \cdot PD_{\mathcal{T}}(a) \text{ and } \\
DI_{\TN}^{ML}(a) &\coloneqq PD_{\mathcal{T}'}(a) \text{ with } \mathcal{T}' = \argmax_{\mathcal{T} \in \TN} \mathbb{P}(\mathcal{T}),
\end{align}
where $\mathbb{P}(\mathcal{T})$ is the probability of $\mathcal{T}$ and $\mathcal{T}'$ is a most likely embedded tree. 
If the argmax is not unique, we arbitrarily choose one of the embedded trees with maximum probability.
\end{definition}

\noindent
Note that as the Shapley Value and the Fair Proportion Index are equivalent on rooted phylogenetic trees (\citet{Fuchs2015}), the embedded values coincide as well, i.e. $SV_{\TN}^{\min}(a) = FP_{\TN}^{\min}(a)  \text{ for all }  a \in X$ etc.

\begin{example}
Consider the rooted phylogenetic network $\mathcal{N}$ on $X=\{A,B,C,D\}$ and its embedded trees $\mathcal{T}_1, \mathcal{T}_2$ and $\mathcal{T}_3$ depicted in Figure \ref{fig_embedded} and fix taxon $A \in X$. Then we have $FP_{\mathcal{T}_1}(A) = \frac{7}{3}, FP_{\mathcal{T}_2}(A) = \frac{5}{2}$ and $FP_{\mathcal{T}_3}(A) = \frac{11}{6}$.
Moreover, $\mathbb{P}(\mathcal{T}_1) = \frac{6}{11}, \, \mathbb{P}(\mathcal{T}_2) = \frac{2}{11}$ and $\mathbb{P}(\mathcal{T}_3) = \frac{3}{11}$.
 Thus, we retrieve the different versions of the \emph{embedded Fair Proportion Index} of $A$ as
$FP_{\TN}^{\min}(A) = \frac{11}{6}, \, FP_{\TN}^{\max}(A) = \frac{5}{2}, \, FP_{\TN}^{\sum}(A) = \frac{20}{3}, \, FP_{\TN}^{\varnothing}(A) = \frac{20}{9}, \, FP_{\TN}^{\varnothing_{inh}}(A) = \frac{49}{22}$ and $FP_{\TN}^{ML}(A) = \frac{7}{3}$.
\end{example}

\subsection{LSA associated Shapley Value and Fair Proportion Index}
An alternative way of reducing a phylogenetic network to its treelike content is the LSA tree. Thus, we will now introduce the \emph{LSA associated Shapley Value} and the \emph{LSA associated Fair Proportion Index}.

\begin{definition}[LSA associated Shapley Value, LSA associated Fair Proportion Index]
Let $\mathcal{N}$ be a rooted phylogenetic network on some taxon set $X$.
Let $a \in X$ be a taxon in $X$.
Then we use $DI^{LSA}(a)$ with $DI \in \{SV,FP\}$ to denote the \emph{LSA associated Shapley Value} or \emph{LSA associated Fair Proportion Index} and define
\begin{equation}
DI^{LSA}(a) \coloneqq DI_{\mathcal{T}_{LSA}(\mathcal{N})}(a),
\end{equation}
where $DI_{\mathcal{T}_{LSA}(\mathcal{N})}(a)$ is the respective diversity index (i.e. the Shapley Value or the Fair Proportion Index) in the LSA tree $\mathcal{T}_{LSA}(\mathcal{N})$ associated with $\mathcal{N}$.
\end{definition} 

\noindent
Obviously, $SV^{LSA}(a) = FP^{LSA}(a) \text{ for all } a \in X$, because the two values coincide for rooted phylogenetic trees, thus they coincide in particular for the LSA tree.

\begin{example}
Consider the rooted phylogenetic network $\mathcal{N}$ and its associated LSA tree $\mathcal{T}_{LSA}(\mathcal{N})$ depicted in Figure \ref{fig_lsatree} and fix taxon $a \in X$. Then the \emph{LSA associated Fair Proportion Index} of $A$ is $FP^{LSA}(A) = \frac{1}{2} + \frac{2}{1} = \frac{5}{2}$.
\end{example}

\subsection{Generalized Shapley Value}
As the definition of the Shapley Value is only indirectly based on a given phylogenetic $X$-tree and just requires a measure of phylogenetic diversity for all subsets $S \subseteq X$ of taxa (cf. Definition \ref{def_sv}), we now introduce an alternative way of calculating the Shapley Value for the taxa of a phylogenetic network $\mathcal{N}$. 
We suggest to calculate the Shapley Value according to its definition and use any measure of generalized phylogenetic diversity (e.g. the phylogenetic net diversity, the embedded phylogenetic diversity or the LSA associated phylogenetic diversity) as an input. We call the resulting value the \emph{generalized original Shapley Value}.

\begin{definition}[Generalized Shapley Value] \label{def_gen_sv}
Let $\mathcal{N}$ be a rooted phylogenetic network on some taxon set $X$ and let $\TN$ be the (multi)set of all rooted phylogenetic $X$-trees displayed by $\mathcal{N}$. 
Let $a \in X$ be a taxon in $X$ and let $\mathcal{PD}(S)$ denote any generalized measure of phylogenetic diversity of a subset $S \subseteq X$ of taxa in $\mathcal{N}$, i.e. $\mathcal{PD}(S) \in \{ PND(S), \, PND^{inh}(S), \, PND^{ML}(S), \, PSD(S), PD_{\TN}^{\min}(S), \, \\ PD_{\TN}^{\max}(S), \, PD_{\TN}^{\sum}(S), \, PD_{\TN}^{\varnothing}(S), \, PD_{\TN}^{\varnothing_{inh}}, \, PD_{\TN}^{ML}(S), \, PD^{LSA}(S) \}$. \\
Then we define the \emph{generalized original Shapley Value} of $a$ as
	\begin{equation}\label{gen_sv_org}
	SV_{\mathcal{PD}}(a) = \frac{1}{n!}\sum_{\substack{S \subseteq X \\ a \in S}} \Big( (\lvert S \rvert -1)!(n- \lvert S \rvert)! 
	(\mathcal{PD}(S)-\mathcal{PD}(S \setminus \{a\})) \Big),
	\end{equation}
	where $n = \vert X \vert$ and $S$ denotes a subset of species containing taxon $a$ and the sum runs over all such subsets possible. 
\end{definition}

\begin{example}
Consider the rooted phylogenetic network $\mathcal{N}$ on $X=\{A,B,C,D\}$ depicted in Figure \ref{fig_embedded}. We now calculate the \emph{generalized original Shapley Value} of taxon $A \in X$ and choose the phylogenetic net diversity (cf. Definition \ref{def_pnd}) as input. We have to consider the following subsets $S \subseteq X$: $\{A\}, \{A,B\}, \{A,C\}, \{A,D\}, \\ \{A,B,C\}, \{A,B,D\}, \{A,C,D\}$ and $\{A,B,C,D\}$. Thus,
\begin{align*}
	SV_{PND}(A)&= \frac{1}{4!}\sum_{\substack{S \subseteq X \\ A \in S}} \Big( (\lvert S \rvert -1)!(\lvert X \rvert -\lvert S \rvert)! (PND(S)-PND(S\setminus \{A\})) \Big) \\
		   &= \frac{1}{4!} \Big[(1-1)!(4-1)!(3-0)  \\
		  	&\qquad {} + (2-1)!(4-2)! \big( (4-3)+(5-3)+(6-3) \big)  \\
		  	&\qquad {} + (3-1)!(4-3)! \big( (6-4)+(7-6)+(7-4) \big) \\
		  	&\qquad {} + (4-1)!(4-4)! (8-7) \Big] \\
		  &= \frac{1}{24} \Big[ 1 \cdot 6 \cdot 3 + 1 \cdot 2 \cdot (1+2+3) + 2 \cdot 1 \cdot (2+1+3) + 6 \cdot 1 \cdot 1 \Big] \\
		  &= \frac{48}{24} \\
		  &= 2.
	\end{align*} 
\end{example}

\subsection{Relationship between the different versions of the Shapley Value for phylogenetic networks}
We now shortly compare the generalized Shapley Value and the embedded Shapley Value of a phylogenetic network $\mathcal{N}$ on $X$. \\

\noindent
The first observation to make is that, in general, 
	\begin{itemize}
	\item $SV_{PD_{\TN}^{\min}}(a) \neq SV_{\TN}^{\min}(a)$ and
	\item $SV_{PD_{\TN}^{\max}}(a) \neq SV_{\TN}^{\max}(a)$
	\end{itemize}
for $a \in X$. 
Consider for example the rooted phylogenetic network $\mathcal{N}$ on $X=\{A,B,C,D\}$ depicted in Figure \ref{fig_embedded} and fix taxon $A$. Then we have $SV_{PD_{\TN}^{\min}}(A) = \frac{9}{4} \neq \frac{11}{6} = SV_{\TN}^{\min}(A)$ and $SV_{PD_{\TN}^{\max}}(A) = \frac{13}{6} \neq \frac{5}{2} =  SV_{\TN}^{\max}(A)$. \\

\noindent 
The second observation to make is
$$ SV_{PD_{\TN}^{ML}}(a) = SV_{\TN}^{ML}(a) $$
if the most likely tree $\mathcal{T}' \in \TN = \argmax\limits_{\mathcal{T} \in \TN} \mathbb{P}(\mathcal{T})$ is fixed, because:
	\begin{align*}
	SV_{\TN}^{ML}(a) &= SV_{\mathcal{T}'} \; \text{ with } \mathcal{T}' = \argmax\limits_{\mathcal{T} \in \TN}  \mathbb{P}(\mathcal{T}) \\
	&= \frac{1}{n!}\sum_{\substack{S \subseteq X \\ a \in S}} \Big( (\lvert S \rvert -1)!(n- \lvert S \rvert)! 
	(PD_{\mathcal{T}'}(S)-PD_{\mathcal{T}'}(S \setminus \{a\})) \Big) \\
	&= SV_{PD_{\TN}^{ML}}(a).
	\end{align*}

\noindent
Moreover, it is easy to see that for all $a  \in X$
	\begin{enumerate}
	\item [(i)] $SV_{PD_{\TN}^{\sum}}(a) = SV_{\TN}^{\sum}(a)$,
	\item [(ii)] $SV_{PD_{\TN}^{\varnothing}}(a) = SV_{\TN}^{\varnothing}(a)$ and
	\item [(iii)]$SV_{PD_{\TN}^{\varnothing_{inh}}}(a) = SV_{\TN}^{\varnothing_{inh}}(a)$.
	\end{enumerate}
	
\begin{proof}
We only show (i), but (ii) and (iii) follow analogously. \\
Recall that $PD_{\TN}^{\sum}(S) = \sum\limits_{\mathcal{T} \in \TN} PD_{\mathcal{T}}(S)$. Thus,
\begin{align*}
SV_{PD_{\TN}^{\sum}}(a) &= \frac{1}{n!}\sum_{\substack{S \subseteq X \\ a \in S}} \Big( (\lvert S \rvert -1)!(n- \lvert S \rvert)! 
	(PD_{\TN}^{\sum}(S)-PD_{\TN}^{\sum}(S \setminus \{a\})) \Big) \\
	&= \frac{1}{n!}\sum_{\substack{S \subseteq X \\ a \in S}} \Big( (\lvert S \rvert -1)!(n- \lvert S \rvert)! 
	\big(\sum\limits_{\mathcal{T} \in \TN} PD_{\mathcal{T}}(S)-\sum\limits_{\mathcal{T} \in \TN} PD_{\mathcal{T}}(S \setminus \{a\}) \big) \Big) \\
	&= \frac{1}{n!}\sum_{\substack{S \subseteq X \\ a \in S}} \Big( (\lvert S \rvert -1)!(n- \lvert S \rvert)! 
	\big(\sum\limits_{\mathcal{T} \in \TN} (PD_{\mathcal{T}}(S)-PD_{\mathcal{T}}(S \setminus \{a\})) \big) \Big). \\
\end{align*}
On the other hand we have
\begin{align*}
SV_{\TN}^{\sum}(a) &= \sum\limits_{\mathcal{T} \in \TN} SV_{\mathcal{T}} (a) \\
&= \sum\limits_{\mathcal{T} \in \TN} \Big( \frac{1}{n!}\sum_{\substack{S \subseteq X \\ a \in S}} \big( (\lvert S \rvert -1)!(n- \lvert S \rvert)! (PD_{\mathcal{T}}(S)-PD_{\mathcal{T}}(S \setminus \{a\})) \big) \Big) \\
&= \frac{1}{n!} \sum_{\substack{S \subseteq X \\ a \in S}} \Big( (\lvert S \rvert -1)! (n- \lvert S \rvert)! \big( \sum\limits_{\mathcal{T} \in \TN} (PD_{\mathcal{T}}(S)-PD_{\mathcal{T}}(S \setminus \{a\})) \big) \Big).
\end{align*}
Thus, $$SV_{PD_{\TN}^{\sum}}(a) = SV_{\TN}^{\sum}(a).$$
\end{proof}

\noindent
If we compare the LSA associated Shapley Value $SV^{LSA}$ and the generalized Shapley Value $SV_{PD^{LSA}}$ that uses the LSA associated phylogenetic diversity as input, we see that all calculations are based upon the LSA tree associated with a network $\mathcal{N}$ on $X$, thus for all $a \in X$
	\begin{enumerate}
	\item [(iii)] $SV^{LSA}(a) = SV_{PD^{LSA}}(a)$.
	\end{enumerate}

\subsection{Net Fair Proportion Index}
Before turning to our software tool and real data, we introduce one last index concept for networks, namely the \emph{Net Fair Proportion Index}. While in the previous sections we have always reduced a network $\mathcal{N}$ on $X$ to its treelike content in order to calculate the Fair Proportion Index for its taxa (i.e. we have defined the embedded Fair Proportion Index and the LSA associated Fair Proportion Index), we now try to directly adapt the definition of the Fair Proportion Index (cf. Definition \ref{def_fp}) to networks by considering \emph{all} paths between the root and a taxon. \\
Without loss of generality we assume the network $\mathcal{N}$ to come with inheritance probabilities (if no inheritance probabilities are given for $\mathcal{N}$, we set $\gamma_e = \frac{1}{2}$ for all reticulation edges $e$). \\
The idea is now to define the Net Fair Proportion Index of a taxon $a \in X$ by considering all paths from the root to $a$ and calculating a value for each path individually. Similar to the original Fair Proportion Index, we calculate this value as a weighted sum of branch lengths, where each branch length is weighted according to the number of its descendants. However, we additionally weight the possible descendants of an edge $e$ by their probability of actually being a descendant of this edge. 
We then use the weighted mean of these values for all paths, where a path is weighted according to its probability, and call the resulting value the Net Fair Proportion Index. 
\begin{definition}[Net Fair Proportion Index] \label{def_NFP}
Let $\mathcal{N}$ be a rooted phylogenetic network on some taxon set $X$. 
Let $\lambda_e$ denote the length of an edge $e$ in $\mathcal{N}$ and let $D_e$ denote the set of leaves that are descendants of $e$. \\
For each leaf $d \in D_e$ we use $\mathbb{P}_{desc}^{e}(d)$ to denote the probability of $d$ being descendent from $e$ and calculate $\mathbb{P}_{desc}^{e}(d)$ as 
	\begin{equation}
	\mathbb{P}_{desc}^{e}(d) = \sum\limits_{P \in \mathcal{P}_{e,d}} \mathbb{P}(P),
	\end{equation}
where $\mathcal{P}_{e,d}$ is the set of paths from the endpoint of $e$ to the leaf $d$ in $\mathcal{N}$ and $\mathbb{P}(P)$ is the probability of any such path (the probability of a path is calculated as the product of all probabilities assigned to its edges). \\
Now let $a \in X$ be a taxon of $\mathcal{N}$ and let $\mathcal{P}_{\rho a}$ be the set of all paths from $\rho$ to $a$ in $\mathcal{N}$. Then we define the \emph{Net Fair Proportion Index} of $a$ as
	\begin{equation}
	NFP(a) = \sum_{P \in \mathcal{P}_{\rho a}} \mathbb{P}(P) \cdot \Big( \sum_{e \in P} \frac{\lambda_e}{\sum\limits_{d \in D_e} \mathbb{P}_{desc}^{e}(d)} \Big).
	\end{equation}
\end{definition}
\begin{example} \label{example_nfp}
Consider the rooted phylogenetic network $\mathcal{N}$ on $X=\{A,B,C,D\}$ depicted in Figure \ref{fig_embedded}. 
We now calculate the \emph{Net Fair Proportion Index} for taxon $B \in X$: \\
	There are two paths from the root $\rho$ to $B$ in $\mathcal{N}$, namely 
	\begin{align*}
	P_1 &= \big( (\rho,v), (v,u), (u,r_1), (r_1,B) \big) \; \text{ with probability } \mathbb{P}(P_1) = \frac{1}{3} \; \text{ and }\\
	P_2 &= \big( (\rho,v), (v,w), (w,r_1), (r_1,B) \big) \; \text{ with probability } \mathbb{P}(P_2) = \frac{2}{3}.
	\end{align*}		
	Consider, for example, the edge $e=(\rho,v)$. The set of possible descendants from $e$ consists of the taxa $A, B$ and $C$, thus, $D_e = \{A,B,C\}$. The probabilities of these taxa descending from $e$ calculate as
		\begin{align*}
		\mathbb{P}_{desc}^{e}(A) &= 1, \\
		\mathbb{P}_{desc}^{e}(B) &= \frac{1}{3} + \frac{2}{3} = 1 \text{ and } \\
		\mathbb{P}_{desc}^{e}(C) &= \frac{3}{4}.
		\end{align*}
	Analogously, these probabilities can be calculated for all other edges on $P_1$ and $P_2$. 
	Omitting edges of length 0 (i.e. hybridization edges) in the sum, we have
		\begin{align*}
		NFP(B) &= \frac{1}{3} \Big( \underbrace{\frac{1}{1}}_{(r_1,B)} + \underbrace{\frac{1}{\underbrace{1}_{A} +  \underbrace{\frac{1}{3}}_{B}}}_{(v,u)} +  \underbrace{\frac{1}{\underbrace{1}_{A} + \underbrace{1}_{B} + \underbrace{\frac{3}{4}}_{C}}}_{(\rho,v)} \Big) \\
		&\hspace{5mm} + \frac{2}{3} \Big( \underbrace{\frac{1}{1}}_{(r_1,B)} + \underbrace{\frac{1}{\underbrace{\frac{2}{3}}_{B} +  \underbrace{\frac{3}{4}}_{C}}}_{(v,w)} +  \underbrace{\frac{1}{\underbrace{1}_{A} + \underbrace{1}_{B} + \underbrace{\frac{3}{4}}_{C}}}_{(\rho,v)} \Big) \\
		&= \frac{1}{3} \cdot \frac{93}{44} + \frac{2}{3} \cdot \frac{387}{187} \\
		&= \frac{1559}{748} \\
		&\approx 2.08.
		\end{align*}
Similar calculations yield
	\begin{align*}
	NFP(A) &= \frac{93}{44} \approx 2.11, \\
	NFP(C) &= \frac{2059}{935} \approx 2.20 \text{ and } \\
	NFP(D) &= \frac{13}{5} = 2.6.
	\end{align*}
Note that
	\begin{align*}
	\sum_{a \in X} NFP(a) &= \frac{93}{44} + \frac{1559}{748} + \frac{2059}{935} + \frac{13}{5} \\
	&= 9,
	\end{align*}
thus, the sum of the Net Fair Proportion Indices equals the sum of edge lengths in $\mathcal{N}$. 
\end{example}
\begin{remarks} \leavevmode
	\begin{itemize}
	\item By definition of the Net Fair Proportion Index, this measure is \emph{efficient}, i.e. 
	$$ \sum_{a \in X} NFP(a) = weight(\mathcal{N}), $$
	where $weight(\mathcal{N})$ is the sum of branch lengths of the rooted phylogenetic network $\mathcal{N}$ on $X$.
	\item For a phylogenetic $X$-tree $\mathcal{T}$, the Net Fair Proportion Index reduces to the original Fair Proportion Index, i.e. for all $a \in X$ 
	$$ NFP(a) = FP(a).$$
	\end{itemize}
\end{remarks}

\section{Software and Data}
In order to calculate the different generalized measures of phylogenetic diversity and generalized diversity indices introduced above, we developed a software tool called NetDiversity, which is available from \\ www.mareikefischer.de/Software/NetDiversity.zip.
The tool is written in the programming language Perl and uses modules from BioPerl (\citet{BioPerl}), in particular the Bio::PhyloNetwork package (\citet{Cardona2008a})
The program takes networks represented in the so-called extended Newick format (\citet{Cardona2008}) as an input. Depending on the options chosen, the program either outputs any measure of generalized phylogenetic diversity for all subsets of taxa or any generalized diversity index for all taxa of the network. However, currently the tool can only calculate measures independent of inheritance probabilities.

\noindent
We now apply NetDiversity to a phylogenetic network of swordtails and platyfishes (\textit{Xiphophorus}: Poeciliidae) (cf. \citet{Solis-Lemus2016}). This is one of the few published hybridization networks, even though hybridization is suspected to have occurred in a variety of other organisms as well.
The \textit{Xiphophorus} hybridization network inferred in \citet{Solis-Lemus2016} contains $24$ species and $2$ reticulation nodes (cf. Figure \ref{figure_fishnet}). 
Exemplarily, we use NetDiversity to calculate the different versions of the Fair Proportion Index for the \textit{Xiphophorus} species. Note that there are $2^{24} = 16777216$ possible subsets of taxa for a network on $24$ species, which is why we refrain from calculating any measure of generalized phylogenetic diversity for all subsets of \textit{Xiphophorus} or the generalized Shapley value here.
Table \ref{fp_fish} summarizes the results. For the \textit{Xiphophorus} network, the rankings obtained by the embedded Fair Proportion Indices and the LSA associated Fair Proportion Index are very similar. There are, however, two striking differences concerning the species \textit{X. xiphidium} and \textit{X. nezahuacoyotl}. 
While \textit{X. xiphidium} is ranked low by $FP_{\TN}^{\min}$, it is placed among the top 10 species by all other indices.
The other difference between the indices concerns \textit{X. nezahuacoyotl}, a hybrid species. \textit{X. nezahuacoyotl} is ranked first by $FP^{LSA}$, while it is ranked $12^{th}$, $12^{th}$ and $15^{th}$ by the other indices.

Thus, in case of the \textit{Xiphophorus} network, the different versions of the generalized Fair Proportion Index yield similar results, but there are striking differences. In particular the question of whether hybrid species are of high or low importance for overall biodiversity remains to be considered from a biological perspective. 

\begin{figure}[htbp]
	\centering
	\includegraphics[scale=0.2]{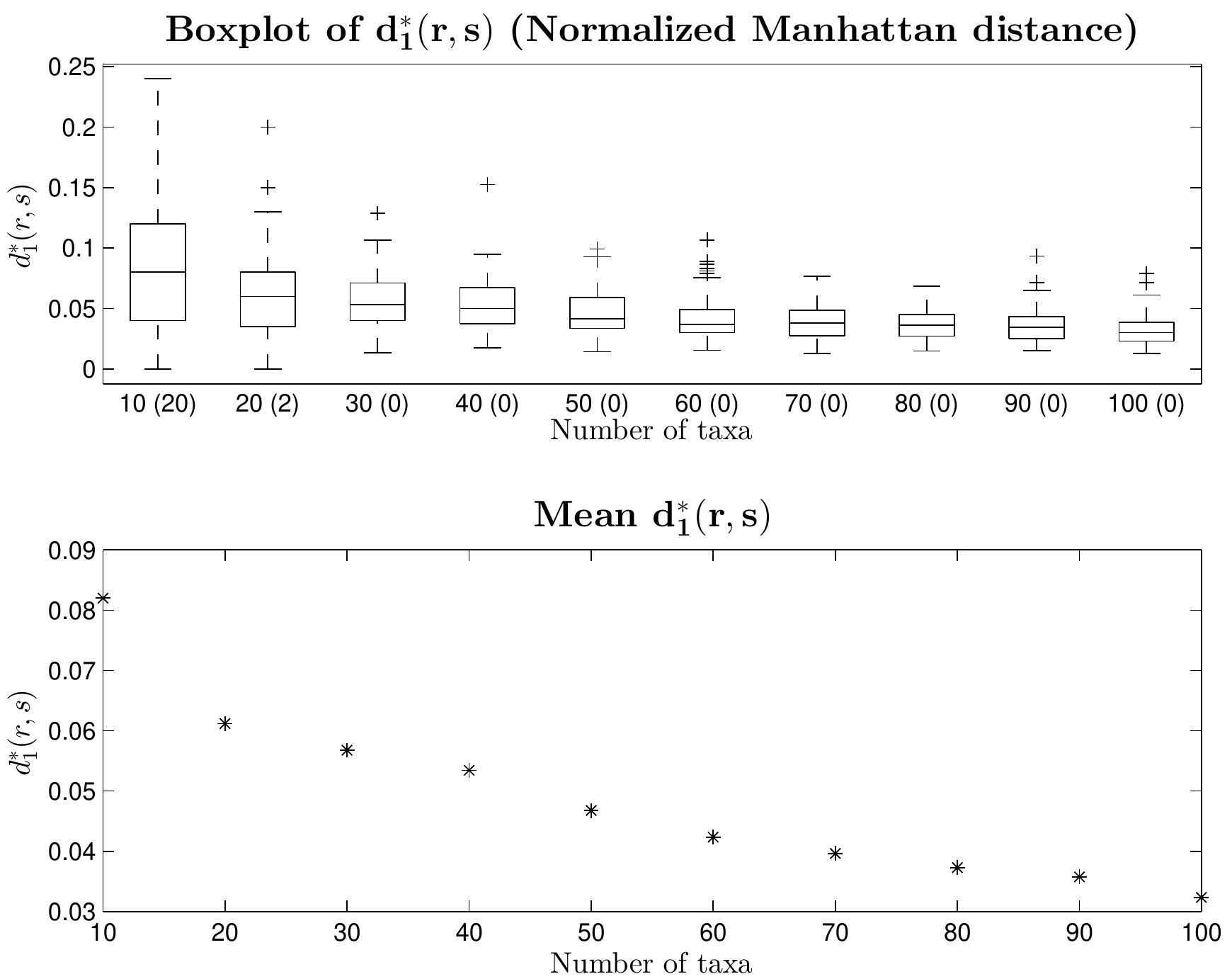}
	\caption{\textit{Xiphophorus} hybridization network with $24$ species and $2$ reticulation nodes (see supporting information (S1 Text) for more information; Figure created with Dendroscope (\citet{Huson2012}).}
	\label{figure_fishnet}
\end{figure}

\begin{table}
\caption{Embedded and LSA associated Fair Proportion Indices (rounded) for the \textit{Xiphophorus} species. The numbers in brackets indicate how species are ranked by the different indices.}
\label{fp_fish}
\begin{tabular}{lllll}
\hline\noalign{\smallskip}
 & $FP_{\TN}^{\min}$ & $FP_{\TN}^{\max}$ & $FP_{\TN}^{\varnothing}$ & $FP^{LSA}$ \\
\noalign{\smallskip}\hline\noalign{\smallskip}
\textit{X. gordoni} & $1.711$ (1) & $1.879$ (3) & $1.795$ (3) & $1.879$ (4)\\
\textit{X. meyeri} & $1.711$ (1) & $1.879$ (3) & $1.795$ (3) & $1.879$ (4)\\
\textit{X. continens} & $1.710$ (3) & $2.117$ (1) & $1.913$ (1) & $2.047$ (2) \\
\textit{X. pygmaeus} & $1.710$ (3) & $2.117$ (1) & $1.913$ (1) & $2.047$ (2) \\
\textit{X. couchianus} & $1.580$ (5) & $1.747$ (7) & $1.663$ (5) & $1.747$ (8) \\
\textit{X. multilineatus} & $1.418$ (6) & $1.835$ (5) & $1.627$ (6) & $1.765$ (6) \\
\textit{X. nigrensis}  & $1.418$ (6) & $1.835$ (5) & $1.627$ (6) & $1.765$ (6) \\
\textit{X. birchmanni} & $1.027$ (8) & $1.341$ (9) & $1.184$ (8) & $1.271$ (10) \\
\textit{X. malinche} & $1.027$ (8) & $1.341$ (9) & $1.184$ (8) & $1.271$ (10) \\
\textit{X. monticolus} & $0.796$ (10) & $0.796$ (14) & $0.796$ (13) & $0.796$ (13) \\
\textit{X. clemenciae} & $0.796$ (10) & $0.796$ (14) & $0.796$ (13) & $0.796$ (13) \\
\textit{X. alvarezi} & $0.782$ (12) & $0.782$ (16) & $0.782$ (15) & $0.782$ (15) \\
\textit{X. mayae}  & $0.782$ (12) & $0.782$ (16) & $0.782$ (15) & $0.782$ (15) \\
\textit{X. hellerii} & $0.618$ (14) & $0.618$ (18) & $0.618$ (18) & $0.618$ (18) \\
\textbf{\textit{X. nezahuacoyotl}} & $\mathbf{0.560}$ \textbf{(15)} & $\mathbf{1.049}$ \textbf{(12)} & $\mathbf{0.804}$ \textbf{(12)} & $\mathbf{2.237}$ \textbf{(1)} \\
\textit{X. montezumae} & $0.560$ (15) & $1.060$ (11) & $0.810$ (11) & $0.990$ (12) \\
\textit{X. signum} & $0.532$ (17) & $0.532$ (20) & $0.532$ (19) & $0.532$ (20) \\
\textit{X. cortezi} & $0.525$ (18) & $0.840$ (13) & $0.682$ (17) & $0.770$ (17) \\
\textit{X. variatus} & $0.450$ (19) & $0.576$ (19) & $0.494$ (20) & $0.578$ (19) \\
\textbf{\textit{X. xiphidium}} & $\mathbf{0.305}$ \textbf{(20)} & $\mathbf{1.717}$ \textbf{(8)} & $\mathbf{1.011}$ \textbf{(10)} & $\mathbf{1.717}$ \textbf{(9)} \\
\textit{X. evelynae} & $0.248$ (21) & $0.416$ (21) & $0.332$ (21) & $0.416$ (21) \\
\textit{X. milleri} & $0.147$ (22) & $0.285$ (22) & $0.216$ (22) & $0.285$ (22) \\
\textit{X. andersi} & $0.117$ (23) & $0.218$ (23) & $0.168$ (23) & $0.218$ (23) \\
\textit{X. maculatus} & $0.079$ (24) & $0.136$ (24) & $0.108$ (24) & $0.136$ (24) \\
\noalign{\smallskip}\hline
\end{tabular}
\end{table}

\section{Discussion and Outlook}
In this paper, we have introduced different approaches towards the generalization of phylogenetic diversity and phylogenetic diversity indices from trees to networks. 
Our approaches provide an extension to existing prioritization tools in conservation biology and allow for the consideration of phylogenetic networks in prioritization decisions. 
This is of importance if the evolutionary history of a set of species is known to be non-treelike, and thus cannot be represented by a phylogenetic tree.
Here, we have mainly focused on hybridization networks, but mathematically our approaches are also applicable to networks representing horizontal gene transfer.
We have applied our methods to a phylogenetic network representing the evolutionary relationships among swordtails and platyfishes (\textit{Xiphophorus}: Poeciliidae), whose evolution is characterized by widespread hybridization. We have seen that different biodiversity indices may induce striking differences in the ranking order of taxa for conservation.
Therefore, we remark that further research concerning the biological plausibility of our approaches is necessary before they can be put into practice. 
This may be achieved when more phylogenetic networks for different groups of organisms become available and can be analyzed under both a biological and mathematical perspective.
Decisions in biodiversity conservation and taxon prioritization do always require thorough examination and should include as much information as possible.

\section*{Supporting Information}
\textbf{S1 Text. Supporting information file that contains the \textit{Xiphophorus} hybridization network (\citet{Solis-Lemus2016}, its LSA tree and its embedded trees.}

\section*{Acknowledgements}
We thank Volkmar Liebscher for helpful discussions on this research project and two anonymous reviewers for helpful comments on an earlier version of this manuscript.
The first author also thanks the Ernst-Moritz-Arndt-University Greifswald for the Landesgraduiertenförderung studentship, under which this
work was conducted.

\section*{References}
\bibliographystyle{model1-num-names}\biboptions{authoryear}
\bibliography{Sources}   

\end{document}